\newcommand\TombStone{\rule{.7ex}{1.7ex}}
\renewcommand{\qedsymbol}{\TombStone}
\newcommand{\qedd}{\let\qed\relax\quad\raisebox{-.1ex}{$\qedsymbol$}}
\newcommand{\claimqedd}{\let\qed\relax\quad\raisebox{-.1ex}{$\lrcorner$}}
\newcommand{\Kb}{{\accentset{\leftrightarrow}{K}}}
\newcommand{\KN}{{\accentset{\leftrightarrow}{K}}_n}
\newcommand{\GN}{{\accentset{\rightarrow}{G}}}
\newcommand{\KD}{{\accentset{\leftrightarrow}{K}}_d}
\newcommand{\KM}{{\accentset{\leftrightarrow}{K}}_{n'}}
\newcommand{\RA}[1]{\mathrel{\stackrel{\makebox[0pt]{{\tiny $#1$}}}{\rightarrow}}}
\newcommand{\IP}{\mathbf{1}}
\newcommand{\RP}{\mathbf{\bar{1}}}
\DeclareMathSymbol{:}{\mathord}{operators}{"3A} %
\newcommand{\TL}[2]{#1 : #2}
\newcommand{\fO}{\mathcal{O}}
\newcommand{\im}[1]{\mathrm{im}(#1)}
\newtheorem{theorem}{Theorem}[section]
\newtheorem{lemma}[theorem]{Lemma}
\newtheorem{corollary}[theorem]{Corollary}
\title{Fixed-point cycles and EFX allocations}
\def\lknote{Institut f\"ur Informatik, Freie Universit\"at Berlin, Germany. E-mail:~\texttt{laszlo.kozma@fu-berlin.de}. {Research supported by DFG grant KO 6140/1-1.}}
\author{Benjamin Aram Berendsohn\thanks{Institut f\"ur Informatik, Freie Universit\"at Berlin, Germany. E-mail:~\texttt{beab@zedat.fu-berlin.de}. {Research supported by DFG grant KO 6140/1-1.}} \and Simona Boyadzhiyska\thanks{Institut f\"{u}r Mathematik, Freie Universit\"{a}t Berlin, Germany. E-mail:~\texttt{s.boyadzhiyska@fu-berlin.de}. {Research supported by the Deutsche Forschungsgemeinschaft (DFG) Graduiertenkolleg ``Facets of Complexity'' (GRK 2434).} } \and L\'{a}szl\'{o} Kozma\thanks{\lknote}}
\date{January 21, 2022}
\begin{document}
		\maketitle

	\begin{abstract}

	We study edge-labelings of the complete bidirected graph $\KN$ with \emph{functions} from the set $[d] = \{1, \dots, d\}$ to itself. We call a cycle in $\KN$ a \emph{fixed-point cycle} if composing the labels of its edges results in a map that has a fixed point, and we say that a labeling is \emph{fixed-point-free} if no fixed-point cycle exists. For a given $d$, we ask for the largest value of $n$, denoted $R_f(d)$, for which there exists a fixed-point-free labeling of $\KN$. Determining $R_f(d)$ for all $d >0$ is a natural Ramsey-type question, generalizing some well-studied zero-sum problems in extremal combinatorics. The problem was recently introduced by Chaudhury, Garg, Mehlhorn, Mehta, and Misra, who proved that $d \leq R_f(d) \leq d^4+d$ and showed that the problem has close connections to \emph{EFX allocations}, a central problem of fair allocation in social choice theory.

In this paper we show the improved bound $R_f(d) \leq d^{2 + o(1)}$, yielding an efficient ${{(1-\varepsilon)}}$-EFX allocation with $n$ agents and $O(n^{0.67})$ unallocated goods for any constant $\varepsilon \in (0,1/2]$; this improves the bound of $O(n^{0.8})$ of Chaudhury, Garg, Mehlhorn, Mehta, and Misra.

{Additionally, we prove the stronger upper bound $2d-2$, in the case where all edge-labels are \emph{permulations}. A very special case of this problem, that of finding \emph{zero-sum cycles} in digraphs whose edges are labeled with elements of $\mathbb{Z}_d$, was recently considered by Alon and Krivelevich and by M\'{e}sz\'{a}ros and Steiner. %
Our result improves the bounds obtained by these authors and extends them to labelings from an arbitrary (not necessarily commutative) group, while also simplifying the proof.}

\end{abstract}

\section{Introduction}
\label{sec:introduction}
Let $\KN$ denote the complete bidirected graph on $n$ vertices, that is, $\KN$ has directed edges $(u,v)$ and $(v,u)$ for every pair $u,v$ of distinct vertices. For an integer $d>0$, a \emph{$d$-labeling}, or simply \emph{labeling} $\ell$ of $\KN$ is an assignment of a function $\ell_e \colon [d] \rightarrow [d]$ to each edge $e$ of $\KN$, where $[d] = \{1, \dots, d\}$.

Let $C$ be a simple cycle of $\KN$ with edges $e_1, \dots, e_k$ (appearing in this order), where each edge $e_i$ is labeled with a function $f_i$, for $i \in [k]$. We say that $C$ is a \emph{fixed-point cycle} if the function $f=f_1 \circ f_2 \circ \cdots \circ f_k = f_k(f_{k-1}(\cdots f_1(x)\cdots))$ has a fixed point, i.e.,\ $f(x) = x$ for some $x \in [d]$. Observe that cyclically permuting the edges along $C$ does not affect the existence of a fixed point.

Let $R_f(d)$ be the largest value $n$ such that a $d$-labeling of $\KN$ with no fixed-point cycle exists. Although not obvious a priori, the finiteness of $R_f(d)$ for all $d$ can be seen through a simple reduction to Ramsey's theorem. The bound obtained in this way, however, is doubly exponential in $d$~\cite{CGMMM21}.

A lower bound of $R_f(d) \geq d$ can be obtained through the following construction (found independently by various authors in different contexts). Say $V(\KD) = [d]$, and label all edges $(i,j)$ such that $i<j$ with the function $x\mapsto x$ and all other edges with the function $x\mapsto x+1 ~(\mbox{mod}~d)$. The avoidance of fixed-point cycles follows from the observation that every simple cycle contains at least one and at most $d-1$ edges of the form $(i,j)$ with $i>j$. 

The question of determining $R_f(d)$ was recently raised by Chaudhury, Garg, Mehlhorn, Mehta, and Misra~\cite{CGMMM21} in a slightly different, but equivalent form they call the \emph{rainbow cycle problem}. They prove the upper bound $R_f(d) \leq d^4 + d$.

The motivation in~\cite{CGMMM21} for introducing this problem comes from an application to \emph{discrete fair division}. In particular, through an elegant and surprising connection, they show that \emph{polynomial} upper bounds on $R_f(d)$ yield nontrivial guarantees for the quality of allocations in a certain natural setting with a strong fairness condition.

\paragraph{EFX allocations.} In economics and computational social choice theory, the \emph{discrete fair division} problem asks to distribute a set of $m$ indivisible goods among $n$ agents in a \emph{fair} way. The problem has a long history (see for example~\cite{Steinhaus}) and different notions of fairness have been extensively studied, leading to a rich set of algorithmic and hardness results (e.g., see~\cite{book1} or~\cite{Caragiannis} and the references therein for an overview and precise definitions).

As simple examples show, complete \emph{envy-freeness} cannot, in general, be achieved. One of the most compelling relaxations of this notion is \emph{envy-freeness up to any good} (EFX). Informally, EFX means that no agent should prefer the goods received by any other agent to their own, if an arbitrary single good is removed from the other's set. The existence of EFX-allocations is considered one of the central open questions of contemporary social choice theory (e.g., see~\cite{Caragiannis, Procaccia, CGMMM21}), and thus, various relaxations of it have been proposed. In particular, it is desirable to show that an EFX-allocation exists (and can be efficiently computed) when (i) a certain global number $t$ of goods are left unallocated, and (ii) envy-freeness is required to hold when every agent scales the values of others' goods by a factor of $1-\varepsilon$ for some $0 \leq \varepsilon < 1$.
Such an allocation is called \emph{$(1-\varepsilon)$-EFX with $t$ unallocated goods}. It is desirable to minimize both $t$ and $\varepsilon$, with the original EFX question requiring $t=\varepsilon=0$.

Very recently, Chaudhury, Garg, Mehlhorn, Mehta, and Misra~\cite{CGMMM21} introduced a correspondence between approximate EFX-allocations and the fixed-point cycle problem described above (in their terminology, the \emph{rainbow cycle problem}). Omitting precise constant factors, the connection can be summarized as follows.

\begin{theorem}[Theorem 4 in~\cite{CGMMM21}]
For any constant $\varepsilon \in (0,1/2]$, if $R_f(d) \in O(d^c)$ for some $c \geq 1$, then there exists a $(1-\varepsilon)$-EFX allocation with $O((n/\varepsilon)^{\frac{c}{1+c}})$ unallocated goods, where $n$ is the number of agents.
\label{thmc}
\end{theorem}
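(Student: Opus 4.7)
The plan is to follow the framework of Chaudhury, Garg, Mehlhorn, Mehta, and Misra~\cite{CGMMM21}, which combines iterative improvement on partial allocations with a reduction of any ``stuck'' state to a fixed-point-free edge-labeling. Concretely, I would maintain a partial $(1-\varepsilon)$-EFX allocation $(X_1,\dots,X_n)$, a pool $P$ of unallocated goods of current size $t$, and a scalar potential $\phi$ --- for example, a lexicographically sorted tuple of agent utilities or a suitably discretized Nash social welfare --- chosen so that $\phi$ is bounded above and strictly increases with every reallocation that resolves a certain envy-blocking structure. A standard argument bounds the number of such improvements by a polynomial in $n$, $1/\varepsilon$, and the logarithm of the total valuation, so it suffices to show that whenever $t > c_0 \cdot (n/\varepsilon)^{c/(c+1)}$, an improving reallocation must exist.

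The technical heart is the following reduction. Starting from a stuck allocation, I would extract from it a $d$-labeling of the complete bidirected graph $\KM$ on $n' = \Theta(t)$ chosen vertices (representing, say, a subset of pool goods together with their current ``champion'' agents), where the ground set $[d]$ indexes a global palette of $d = \Theta((n/\varepsilon)^{1/(c+1)})$ upgrade categories per agent. For each ordered edge $(i,j)$, the label $\ell_{ij}\colon [d]\to [d]$ records which upgrade category becomes active for $j$ after the local swap prescribed by input $x$ at $i$, using the champion / most-envious-agent construction from~\cite{CGMMM21}. By direct inspection, a fixed-point cycle of $\ell$ corresponds to a cyclic reassignment that absorbs one good from $P$ into an agent's bundle while strictly increasing $\phi$ and preserving $(1-\varepsilon)$-EFX, contradicting stuckness; hence the extracted labeling must be fixed-point-free.

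Applying the hypothesis $R_f(d) \in O(d^c)$ then forces $n' \leq O(d^c)$. Substituting $n' = \Theta(t)$ and $d = \Theta((n/\varepsilon)^{1/(c+1)})$ yields $t \leq O((n/\varepsilon)^{c/(c+1)})$, as claimed. The $1/\varepsilon$ inside the expression arises because $d$ must be large enough to discretize continuous preference ratios into buckets with multiplicative loss at most $1+\varepsilon$; this is also the reason one obtains only approximate rather than exact EFX.

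The principal obstacle will be the middle reduction: carefully defining the label functions so that (a) they are bona fide functions $[d]\to[d]$ rather than relations, (b) their composition along a cycle corresponds to a globally consistent cyclic reassignment of goods among agents, and (c) each fixed point of the composed map genuinely yields an improving swap that preserves $(1-\varepsilon)$-EFX. Setting this up requires the champion-graph and decomposition arguments of~\cite{CGMMM21}; once the correspondence is in place, the quantitative conclusion follows from the parameter balance above.
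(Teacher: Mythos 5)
The paper you are working from does not prove this statement at all: it is imported verbatim as ``Theorem 4 in~\cite{CGMMM21}'' and used as a black box to translate the paper's own upper bounds on $R_f(d)$ into EFX guarantees. There is therefore no ``paper's own proof'' to compare against.

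As a reconstruction of the argument in~\cite{CGMMM21}, your sketch identifies the right scaffolding --- a potential function driving iterative improvement of a partial allocation, a reduction from ``stuck'' allocations to a fixed-point-free $d$-labeling (equivalently, a digraph with no rainbow cycle), and a parameter balance that chooses $d = \Theta((n/\varepsilon)^{1/(c+1)})$ to trade off discretization loss against the $R_f(d) \le O(d^c)$ bound. The final arithmetic (substituting $n' = \Theta(t)$ and $d$ to get $t \le O((n/\varepsilon)^{c/(c+1)})$) is consistent with the claimed bound. However, the middle step --- the precise construction of the label functions $\ell_{ij}\colon[d]\to[d]$ from the champion graph and the verification that composition along a cycle encodes a feasible, $\phi$-improving, $(1-\varepsilon)$-EFX-preserving reallocation --- is exactly where the technical weight of the theorem lies, and you explicitly defer it (``requires the champion-graph and decomposition arguments of~\cite{CGMMM21}''). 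Your proposal is thus an accurate outline of the external proof's architecture rather than a proof: it cannot be checked for correctness without filling in that reduction, and the claim that fixed points correspond to valid improving swaps is precisely the part that could fail if the labels are defined carelessly (e.g., if the ``upgrade category'' activated at $j$ depends on more than the single input $x$ at $i$, in which case $\ell_{ij}$ is not a well-defined function). Since the paper under review treats the theorem as a citation, the honest conclusion is that your sketch is plausible and consistent with the literature, but is not a self-contained argument and cannot be evaluated against anything in this particular paper.
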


Moreover, if the upper bound on $R_f(d)$ is \emph{constructive} (i.e., a fixed-point cycle can be found in time polynomial in the number $n$ of vertices whenever $n$ exceeds the given bound on $R_f(d)$), then the claimed allocation can also be found in polynomial time. %
The result of $R_f(d) \leq d^4+d$ from~\cite{CGMMM21} is constructive, together with Theorem~\ref{thmc} it thus implies an efficiently computable approximate-EFX %
allocation with $O({n}^{0.8})$ unallocated goods, the first guarantee of this kind with a \emph{sublinear} (in $n$) number of %
unallocated goods. %

\paragraph{Rainbow cycles.}
As remarked in~\cite{CGMMM21}, the question of determining $R_f(d)$ is in itself a natural question of extremal graph theory, independently of its application to EFX-allocations. 

Chaudhury, Garg, Mehlhorn, Mehta, and Misra formulate the problem in a slightly different, but essentially equivalent, way as follows. Given an $n$-partite digraph with each part having \emph{at most} $d$ vertices, a \emph{rainbow cycle} is a cycle that visits each part \emph{at most once}. The task is to determine the largest $n = n(d)$ for which such a digraph can avoid rainbow cycles, with the requirement that each vertex in part $i$ has an incoming edge from at least one vertex in part $j$, for all distinct $i, j\in [n]$. 

Note that we may assume that there is an extremal example containing \emph{exactly} $d$ vertices in each part. Indeed, if $\GN$ is an extremal example and some part has fewer than $d$ vertices, let $v$ be any vertex from that part; then add the necessary number of new vertices, making them all ``clones'' of $v$, that is, every new vertex has the same in- and out-neighborhood as $v$. Then the newly obtained digraph has a rainbow cycle if and only if $\GN$ does. Further, we may assume that each vertex $v$ of $\GN$ has \emph{exactly} one incoming edge from every part other than its own.
With these observations, the equivalence between the rainbow cycle problem and the fixed-point cycle problem is evident: we can view the bipartite digraph containing the edges from a part $V_i$ to another part $V_j$ as the mapping given by $y\mapsto x$, where $(x,y)\in E(V_i, V_j)$.

Our main result improves the upper bound of Chaudhury, Garg, Mehlhorn, Mehta, and Misra.
\begin{restatable}{theorem}{restateThmM}\label{thmm}
For all $d > 0$ we have $R_f(d) \leq d^{2 + o(1)}$. %
\end{restatable}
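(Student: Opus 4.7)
My plan is to argue by contradiction in the rainbow cycle reformulation. Suppose $\GN$ is an $n$-partite digraph with parts $V_1, \dots, V_n$, each of size $d$, in which every vertex has exactly one in-neighbor in every other part; encode this by functions $\phi_{ij} \colon V_j \to V_i$ sending each $y \in V_j$ to its unique in-neighbor in $V_i$. I will assume $\GN$ has no rainbow cycle and aim to show $n \leq d^{2+o(1)}$.

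The central object I intend to track is the reachability set of iterated in-neighbor walks. For $v \in V_1$ and $j \in [n]\setminus\{1\}$, and for each ordered sequence of distinct parts $(i_3, \dots, i_k) \subseteq [n]\setminus\{1,j\}$, put $\pi_v(j; i_3, \dots, i_k) := \phi_{j, i_3}\bigl(\phi_{i_3, i_4}(\cdots\phi_{i_k, 1}(v)\cdots)\bigr) \in V_j$, with $\pi_v(j) := \phi_{j,1}(v)$ when the intermediate sequence is empty. Let $E(v, j) \subseteq V_j$ be the set of all such endpoints. Any $u \in E(v, j) \cap \phi_{1, j}^{-1}(v)$ closes a rainbow cycle through $V_1$ and $V_j$, so the no-cycle hypothesis forces these sets to be disjoint. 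Using $\sum_v |\phi_{1, j}^{-1}(v)| = d$, this yields the upper bound $\sum_{v \in V_1} |E(v, j)| \leq d^2 - d$ for every $j$, and aggregating, $\sum_{j} \sum_{v} |E(v, j)| \leq (n-1)(d^2 - d)$.

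The heart of the proof is a matching lower bound on this double sum that forces a contradiction when $n \gg d^2$. Since the trivial bound $|E(v, j)| \leq d$ already makes the above inequality nearly tight, I cannot argue purely by counting distinct endpoints; I must exploit that the many compositions $\phi_{j, i_3} \circ \cdots \circ \phi_{i_k, 1}$, as the inner parts vary, \emph{force} endpoints to lie in every otherwise-forbidden preimage. My plan is to apply dependent random choice to the family of such compositions: for a typical $(v, j)$, sampling random rainbow inner sequences either produces endpoints that are spread broadly in $V_j$ (so that $|E(v, j)| + |\phi_{1, j}^{-1}(v)|$ exceeds $d$, a direct contradiction), or else the compositions concentrate on a proper sub-alphabet $[d'] \subsetneq [d]$. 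In the latter case I would restrict $\GN$ to the concentrated sub-alphabet to obtain a strictly smaller no-rainbow-cycle instance and apply the theorem by induction, producing a recursion of the form $R_f(d) \leq d \cdot R_f(d')$ which, carefully unwound, delivers $n \leq d^{2+o(1)}$.

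The principal obstacle I anticipate is precisely this dichotomy step: the maps $\phi_{ij}$ are entirely adversarial, and neither spreading nor concentration is automatic, so the dependent random choice argument must navigate both extremes and respect the rainbow constraint (distinct parts) along the sampled inner sequence. To obtain exponent $2$ and not a larger constant, the multiplicative loss at each iteration must stay close to a single factor of $d$, which tightly constrains how aggressively we can clean at each stage; the $o(1)$ in the exponent should ultimately emerge from the $\mathrm{polylog}(d)$ overhead of this iteration. I also expect that averaging over the choice of the distinguished part (the role of $V_1$), rather than fixing it once at the outset, will be needed to sharpen the counting.
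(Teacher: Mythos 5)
Your plan diverges substantially from the paper's, and it has a genuine gap at its heart. The paper does not work with reachability sets $E(v,j)$ and a counting inequality at all. Instead it introduces a \emph{compression} operation on individual vertices: a vertex $w$ is ``$k$-compressed'' if its incoming-edge image-set has size at most $k$; the key lemma (Lemma~\ref{lem_main_help}) shows that for a $k$-compressed $w$ one can either find a fixed-point cycle outright or locate two short paths from $w$ with different starting values to a common valued endpoint, which are then merged so that $w$ becomes $(k-1)$-compressed. Repeating this until many vertices are $\lfloor\sqrt d\rfloor$-compressed lets one restrict to a $\lfloor\sqrt d\rfloor$-labeling and recurse, yielding $R_f(d) < 5d^2 + 9d\log_2 d\cdot(R_f(\lfloor\sqrt d\rfloor)+1)$ and hence $d^{2+o(1)}$.

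The gap in your proposal is the dichotomy step, and it is not a missing detail but a missing idea. Your upper bound $\sum_j\sum_v|E(v,j)|\le(n-1)(d^2-d)$ is essentially the trivial statement that each $|E(v,j)|$ averages at most $d-1$; to reach a contradiction for $n\gg d^2$ you would need the average to exceed $d-1$, i.e.\ you would need $E(v,j)=V_j$ for most pairs $(v,j)$. That is a strong synchronization-type statement which the no-rainbow-cycle hypothesis does not obviously supply, and the proposal does not indicate why dependent random choice, which is a tool for finding large common neighborhoods in bipartite graphs, would yield it here. The fallback ``concentration'' branch is equally problematic: for different pairs $(v,j)$ the iterated compositions may concentrate on \emph{different} sub-alphabets, and there is no mechanism offered for intersecting or aligning these into a single $[d']\subsetneq[d]$ on which the whole labeled digraph can be restricted. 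This is precisely the difficulty that the paper's compression operation sidesteps by working vertex-by-vertex rather than alphabet-globally: after compressing, each vertex in the recursive instance carries its \emph{own} small image-set, which is then bijected into a common $[\lfloor\sqrt d\rfloor]$ only at the moment of recursion. Finally, your target recurrence $R_f(d)\le d\cdot R_f(d')$ is not derived from anything stated; the exponent-$2$ conclusion requires $d'\approx\sqrt d$, and that choice of scale, together with the bookkeeping needed to control the vertex cost of each compression step (the $O(d\log d)$ factor in the paper's recurrence), is exactly the nontrivial content you would still need to supply.
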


Similarly to the previous result, our bound is constructive. Combining Theorem~\ref{thmm} with Theorem~\ref{thmc} thus yields the following result about EFX allocations.

\begin{corollary}
For all $\varepsilon \in (0,1/2]$, there exists an efficient $(1-\varepsilon)$-EFX allocation with $O((n/\varepsilon)^{0.67})$ unallocated goods.%
\end{corollary}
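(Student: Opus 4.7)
The plan is to pass to the rainbow-cycle reformulation described in the introduction: $n$ parts $V_1, \ldots, V_n$, each of size $d$, and for every ordered pair of distinct parts $(j,i)$ a function $g_{ji}\colon V_i\to V_j$ specifying the unique in-neighbor in $V_j$ of each vertex in $V_i$. It suffices to force a rainbow cycle whenever $n\ge d^{2+o(1)}$. The target exponent $2$ strongly suggests a ``two-coordinate'' pigeonhole, with the $o(1)$ slack absorbed by an iterative refinement stage.

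I would first fix a root $v_0 \in V_1$. For each other part $V_j$, attach the two-coordinate type $(w_j, u_j) \in V_j \times V_1$, where $w_j = g_{j,1}(v_0)$ is the in-neighbor of $v_0$ in $V_j$ and $u_j = g_{1,j}(w_j)$ is the in-neighbor of $w_j$ back in $V_1$. Avoiding a rainbow $2$-cycle through $v_0$ forces $u_j \neq v_0$, so the type space has size at most $d(d-1)$. Once $n$ moderately exceeds $d^2$, some type class $S$ is large and all of its parts share the same two-step ``return path'' $u^{\ast} \to w \to v_0$.

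I would then iterate on the substructure induced by $V_1$ together with $\{V_j : j \in S\}$: pick a new root exploiting the shared structure, extend the type to also encode pairwise interactions between the distinguished vertices $w_j$ across parts of $S$, and apply the no-rainbow-$3$-cycle, no-rainbow-$4$-cycle, \ldots{} conditions to pigeonhole onto progressively smaller type classes. The goal is that, once the type is rich enough, any two parts sharing it must close up into a genuine rainbow cycle by a direct combinatorial argument, yielding the desired contradiction.

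The crucial quantitative claim, and the main obstacle I foresee, is that after the initial quadratic pigeonhole, each subsequent round of refinement should cost only a factor of $d^{o(1)}$ rather than a full factor of $d$. Establishing this likely requires a structural or entropy-type lemma showing that longer cycle-avoidance conditions impose far fewer independent constraints on the type once the two-step return path has been fixed. Proving such a shrinkage bound is where the improvement from $d^4$ to $d^{2+o(1)}$ must come from; carrying out $O(\log d / \log\log d)$ rounds of refinement would then yield the overall bound $R_f(d) \le d^{2+o(1)}$, with the whole argument constructive enough to feed into \Cref{thmc}.
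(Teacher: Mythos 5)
In the paper this corollary is a one-line consequence: plug $c = 2 + o(1)$ (from Theorem~\ref{thmm}) into Theorem~\ref{thmc}, note that $c/(1+c) < 0.67$, and inherit efficiency from the constructive bound on $R_f(d)$. You do identify this reduction at the end of your sketch, and that identification is really the entire proof of the corollary as stated.

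The bulk of your proposal, however, attempts to re-derive Theorem~\ref{thmm} itself, and there you have a genuine gap which you yourself flag: the claim that, after the initial quadratic pigeonhole, each further round of type refinement costs only a factor of $d^{o(1)}$. You offer no mechanism for this, only the hope that a ``structural or entropy-type lemma'' exists. There is no a priori reason for longer rainbow-cycle-avoidance conditions to impose so few independent constraints; once the two-step return path through $V_1$ is fixed, the maps among the remaining parts are essentially unconstrained, so a naive pigeonhole on the next coordinate again costs a full factor of $d$, landing at a polynomial exponent strictly above $2$ rather than $2+o(1)$.

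The paper's route is structurally different and supplies exactly the missing mechanism. It defines the image set $\im{v}$ and a \emph{compression} operation (\cref{lem_compression}) that merges two paths starting from a $k$-compressed vertex $w$ on two distinct values of $\im{w}$ and ending at a common vertex on a common value into a single new $(k-1)$-compressed vertex, preserving the (non)existence of fixed-point cycles. The win-win \cref{lem_main_help} guarantees that one can always either exhibit a fixed-point cycle or find the two paths needed to compress some vertex, at a cost of roughly $d\lceil d/k\rceil^2$ vertices. Driving a designated set of $R_f(\lfloor\sqrt d\rfloor)+1$ vertices down to $\lfloor\sqrt d\rfloor$-compressed makes the subdigraph they induce behave like a $\lfloor\sqrt d\rfloor$-labeling, and one recurses. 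The recurrence in \cref{lem_rec}, namely $R_f(d) < 5d^2 + 9 d\log_2 d\cdot(R_f(\lfloor\sqrt d\rfloor)+1)$, unrolls in \cref{lem_calc} to $d^{2+o(1)}$; the $o(1)$ term comes entirely from the $\sqrt d$-recursion, not from any decaying per-round pigeonhole cost. Your sketch does not provide a substitute for this compression-plus-recursion mechanism, so as written it does not establish Theorem~\ref{thmm}, and hence does not yield the corollary from first principles.
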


\paragraph{Zero-sum problems in extremal combinatorics.}

The problem of determining $R_f(d)$ can be cast as a generalization of some classical zero-sum problems in extremal combinatorics (this is somewhat less apparent in the original multi-partite formulation of the problem). Zero-sum problems have received substantial attention and form a well-defined subfield of combinatorics, with an algebraic flavour. Perhaps the earliest result in this area is the Erd\H{o}s-Ginzburg-Ziv theorem~\cite{EGZ} which states that every collection of $2m-1$ integers contains $m$ integers whose sum modulo $m$ is zero (see~\cite{AlonDubiner} for multiple proofs and extensions). 
Zero-sum problems \emph{in graphs} typically ask, given an edge- or vertex-weighted graph, whether a certain substructure exists with zero total weight (modulo some fixed integer). Well-studied cases include complete graphs, cycles, stars, and trees (e.g., see~\cite{Caro, AlonCaro, AlonLinial, Furedi, Bialostocki, Schrijver} for surveys and representative results).

\smallskip
More recently, for a given positive integer $d$, Alon and Krivelevich~\cite{AK20} asked for the maximum integer $n$ such that the edges of the complete bidirected graph $\KN$ can be labeled with integers so that there is no zero-sum cycle modulo $d$.
Denoting this quantity by $n = R_i(d)$, they showed through an elegant probabilistic argument that $R_i(d) \in O(d\log{d})$, with an improvement to $R_i(d) \in O(d)$ when $d$ is prime. The application considered in~\cite{AK20} is finding cycles of length divisible by $d$ in minors of complete graphs.
It is easy to see this question as a special case of our fixed-point cycle problem: simply replace every edge-label $k$ by the function $x \mapsto x+k ~(\mbox{mod}~d)$. Zero-sum-cycles in the original labeling are then in bijection with fixed-point cycles in our new labeling, and thus $R_i(d) \leq R_f(d)$.

Recently, M\'{e}sz\'{a}ros and Steiner~\cite{MS21} improved the result of Alon and Krivelevich, showing $R_i(d) \leq 8d-1$, with further improvements for prime $d$. In fact, M\'{e}sz\'{a}ros and Steiner generalized the result, allowing the labels to come from an arbitrary commutative group of order $d$. 
The proof of the main result in~\cite{MS21} can be seen as an extension of an incremental construction of~\cite{AK20}, combined with an intricate inductive argument that makes use of group-theoretic results.

We improve these results and show an upper bound of $R_i(d) \leq 2d-2$ through somewhat similar, but arguably simpler arguments. Our result extends to arbitrary groups (not necessarily commutative). In fact, we prove the result in the more general setting of fixed-point cycles, when the edge-labels are restricted to \emph{permutations} of $[d]$. The permutation case subsumes the integer case, since the functions $x \mapsto x+k ~(\mbox{mod}~d)$ in the above reduction are permutations. Denoting the corresponding quantity by $R_p(d)$, we thus have $R_i(d) \leq R_p(d) \leq R_f(d)$, and (omitting the easy case $R_i(1) = R_p(1) = R_f(1) = 1$), prove the following.  

\begin{restatable}{theorem}{restateThmP}\label{thmp}
For all $d \ge 2$, we have $R_p(d) \leq 2d-2$.
\end{restatable}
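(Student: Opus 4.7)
The plan is to induct on $d$. The base case $d=2$ is a quick finite check on $\KN$ with $n=3$ and labels in $\{\mathrm{id},(1\,2)\}$. For the inductive step, we assume the statement for all smaller alphabets and suppose towards contradiction that a fixed-point-free permutation labeling of $\KN$ exists with $n \geq 2d-1$.

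We form the auxiliary digraph $H$ on vertex set $V(\KN) \times [d]$ with an arc $(u,x)\to(v,f_{uv}(x))$ for every directed edge $(u,v)$ of $\KN$. Since each label is a permutation, every vertex of $H$ has in-degree and out-degree exactly $n-1$; moreover, a fixed-point simple cycle of $\KN$ corresponds precisely to a directed cycle of $H$ whose projection to $\KN$ visits each vertex at most once, and by assumption no such cycle exists. Pick any $(v_0,x_0)\in V(H)$ and let $H_0$ be its forward closure in $H$. Every vertex of $H_0$ has out-degree $n-1$ in the induced subgraph, so the sum of in-degrees there is $(n-1)|H_0|$; each in-degree in $H_0$ is at most $n-1$, so all must equal $n-1$, making $H_0$ backward-closed as well. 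A forward- and backward-closed subset is a union of weakly connected components, and because in-degree equals out-degree at each vertex (inside $H_0$) weak connectivity implies strong connectivity, so $H_0$ is strongly connected. Writing $T_v=\{x:(v,x)\in H_0\}$, the permutation property of the $f_{uv}$'s yields $f_{uv}(T_u)=T_v$, whence $|T_v|=t$ is constant in $v$. If $t<d$, restricting the labels to the $T_v$'s gives a fixed-point-free permutation labeling of $\KN$ on a $t$-element alphabet: for $t=1$ every $2$-cycle of $\KN$ must already fix the unique element of $T_u$, contradicting fixed-point-freeness of the original; and for $t\geq 2$ the inductive hypothesis gives $n\leq 2t-2<2d-1$, again a contradiction. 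Hence $t=d$ and $H$ itself is strongly connected.

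It remains to derive a contradiction in the strongly connected regime with $n \geq 2d-1$. After conjugating labels so that $f_{v_0u}=\mathrm{id}$ for all $u\neq v_0$ (a normalization that preserves all fixed-point conditions), the edges $\psi_u := f_{uv_0}$ into $v_0$ are $n-1\geq 2d-2$ derangements of $[d]$. For each $u\neq v_0$ let $R^*(u)\subseteq [d]$ denote the set of values at $u$ reachable from $v_0$ by some simple path in $\KN$. Closing such a path by the edge $u\to v_0$ produces a simple cycle whose composition must avoid fixing $x_0$, forcing $\psi_u^{-1}(x_0)\notin R^*(u)$ and hence $|R^*(u)|\leq d-1$. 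On the other hand, $R^*(u)$ contains $x_0$ (via the direct edge $v_0\to u$) together with the $n-2=2d-3$ values $f_{u'u}(x_0)$ for $u'\neq u,v_0$ (via length-$2$ simple paths), so these must collapse into at most $d-1$ distinct elements of $[d]$. The hard part will be to combine this with the analogous bounds from longer simple paths and from starting values $y\in [d]$ other than $x_0$, to force so much structure on the labels that some specific short cycle (a $2$-cycle or $3$-cycle) has a fixed point and thus contradicts its derangement status. This final combinatorial step is the main obstacle; it generalizes the tight small cases $d=2,3$ in which the reachability bound saturates almost immediately and nails down the factor $2$ in the bound $2d-2$.
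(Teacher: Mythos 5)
There is a genuine gap here, and you flag it yourself: the closing paragraph labels the crucial step ``the main obstacle'' and does not supply it. Your forward-closure reduction is correct and appealing---showing that the auxiliary digraph $H$ on $V(\Kb_n)\times[d]$ is balanced, that the forward closure of any vertex is also backward-closed and hence strongly connected, and that the induced fibres $T_v$ have constant size $t$ with $f_{uv}(T_u)=T_v$, so that $t<d$ collapses to a smaller instance by induction. This cleanly disposes of the case where the labeling has a proper invariant fibre system, and it is morally parallel to one branch of the paper's argument. But the remaining case $t=d$ (i.e.\ $H$ strongly connected) is where all the work lies, and your treatment of it consists of one sound observation---that $\psi_u^{-1}(x_0)\notin R^*(u)$, hence $|R^*(u)|\le d-1$, even though $x_0$ and the $n-2$ values $f_{u'u}(x_0)$ must all lie in $R^*(u)$---followed by a declaration that combining this with further bounds ``is the main obstacle.'' That is not a proof of the theorem; it is a restatement of the difficulty.

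The paper avoids the global analysis of $H$ and instead uses an incremental chain construction equipped with a \emph{shifting} normalization. One builds vertices $u_1,v_1,u_2,v_2,\dots$ so that each chain edge carries the identity permutation, and tracks the set $S_i$ of values reachable from $1$ along the chain. The engine is a win-win dichotomy over the as-yet-unused vertex set $W$: either some edge inside $W$ maps $S_i$ off itself, in which case the chain is extended and $|S_i|$ strictly grows at a cost of at most two new vertices; or every edge inside $W$ preserves $S_i$ (and hence $[d]\setminus S_i$), in which case one applies the theorem inductively to the alphabet $[d]\setminus S_i$ on $W$, whose size is at least $2|[d]\setminus S_i|$. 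Since the chain uses at most $2|S_i|-1$ vertices, within $2d-1$ vertices one either reaches $S_i=[d]$ and closes an explicit fixed-point cycle through $u_1$, or triggers the reduction. Your sketch has the invariant-subset branch of this dichotomy but not the chain mechanism that forces progress in the other branch; the derangement/reachability estimates you set up do not by themselves close the argument, and the paper never needs them. If you want to complete your approach, replace the global reachability sets $R^*(u)$ by the sequential chain with per-step shifting and argue the Case~1/Case~2 alternative as above.
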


By the above discussion, Theorem~\ref{thmp} implies that $R_i(d) \leq 2d-2$. In fact, the following more general result holds.

\begin{corollary}\label{cor}
Let $\ell$ be a labeling of $\Kb_{2d-1}$ with elements of a (not necessarily abelian) group  $(G,\cdot)$ of order $d$. Then there is a cycle whose labels multiply to the identity $1 \in G$.
\end{corollary}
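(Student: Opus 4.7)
The plan is to derive Corollary~\ref{cor} as a direct consequence of Theorem~\ref{thmp} via the (right) regular representation of $G$. Theorem~\ref{thmp} applies to labelings by permutations of $[d]$, while the corollary concerns labelings by elements of a group of order $d$, so the main task is to translate group elements into permutations in a way that converts ``composition has a fixed point'' into ``product equals the identity''.

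First I would fix an arbitrary bijection between the underlying set of $G$ and $[d]$, and for each edge $e$ of $\Kb_{2d-1}$ originally labeled by $g \in G$, replace the label by the permutation $\phi_g \colon x \mapsto x \cdot g$. Since $G$ is a group, each $\phi_g$ is a bijection on $G$, hence (via the fixed identification) a permutation of $[d]$, so this gives a valid $d$-labeling of $\Kb_{2d-1}$ by permutations. Since $2d-1 > 2d-2 \geq R_p(d)$, Theorem~\ref{thmp} supplies a fixed-point cycle with edges $e_1, \dots, e_k$ carrying original group labels $g_1, \dots, g_k$.

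Applying the composition convention of the paper, the map induced by $\phi_{g_1} \circ \phi_{g_2} \circ \cdots \circ \phi_{g_k}$ along this cycle is $x \mapsto x \cdot (g_1 g_2 \cdots g_k)$, and in any group such a right-translation has a fixed point if and only if $g_1 g_2 \cdots g_k = 1_G$, by cancellation. Hence the fixed-point cycle supplied by Theorem~\ref{thmp} is precisely a cycle whose original group labels multiply to the identity, proving the corollary. There is no real obstacle here; the only subtle point is the choice of \emph{right} multiplication (rather than left), which ensures the product appears in the natural order of edge traversal along the cycle and matches the corollary's wording most cleanly.
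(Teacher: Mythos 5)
Your proof is correct and takes essentially the same approach as the paper: replace each group label $g$ by the right-translation permutation $x \mapsto x \cdot g$, invoke Theorem~\ref{thmp}, and use cancellation to conclude that a fixed point forces the product to be the identity.
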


As a special case, Corollary~\ref{cor} implies that if $n \geq 2d!-1$ and the edges of $\KN$ are labeled with permutations of $[d]$, then there is a cycle whose labels compose to $\IP_d$.

To our knowledge, the question of permutation-labels has not been considered before. We see it as a natural problem of intermediate generality; it facilitates a simple proof for the integer case, allowing to sidestep the group-theoretic tools used in previous proofs.

\paragraph{Open questions.}
Closing the gap between the lower and upper bounds remains an interesting challenge for all three considered quantities ($R_f(d)$, $R_p(d)$, and $R_i(d)$). The lower bound construction with $d$ vertices discussed earlier can be adapted to all three settings, and it remains a plausible conjecture that $R_i(d) = R_p(d) = R_f(d) = d$. This is easily verified~\cite{CGMMM21} for $d \leq 3$ in the case of $R_f(d)$ (and hence for $R_p(d)$), and was verified via SAT-solvers~\cite{MS21} for $d \leq 6$ in the case of $R_i(d)$. Showing $R_f(d) \in O(d)$  would yield, via Theorem~\ref{thmc}, an approximate EFX-allocation with $O(\sqrt{n})$ unallocated goods.

A further natural question is whether other classical zero-sum results from extremal combinatorics can be extended to the fixed-point setting.

\paragraph{Remark.} After finishing our work, we learned that upper bounds on $R_f(d)$ and $R_p(d)$ similar to ours were independently and concurrently obtained by Akrami, Chaudhury, and Mehlhorn~\cite{acm}.

\paragraph{Organization of the paper.} In \S\,\ref{prelim} we introduce some useful terminology. In \S\,\ref{sec3} we prove Theorem~\ref{thmp}, and in \S\,\ref{sec4} we prove Theorem~\ref{thmm}.

\section{Preliminaries}\label{prelim}

Let $\KN$ be given together with a labeling $\ell$ that assigns functions $[d] \rightarrow [d]$ to the edges.  
By \emph{path} or \emph{cycle} we always mean a \emph{simple} path or cycle. %
All \emph{edges} in this paper are directed, and an edge $(u,v)$ is alternatively denoted by $u \rightarrow v$.

When we say that the edge $u \rightarrow v$ \emph{maps} $x$ to $y$, we mean that the function $\ell_{uv}$ assigned to $u \rightarrow v$ maps $x$ to $y$. Similarly, a path (or cycle) $u_1 \rightarrow \cdots \rightarrow u_k$ maps $x$ to $y$ if, given the label $f_i$ on $u_i \rightarrow u_{i+1}$ for each $i \in [k-1]$, we have $(f_1 \circ \cdots \circ f_{k-1})(x) = y$. For consistency, a path with only one vertex is said to map every value to itself. By $\TL{u}{x} \RA{\ell} \TL{v}{y}$ (or simply $\TL{u}{x} \RA{} \TL{v}{y}$ when the labeling $\ell$ is clear from the context) we denote the fact that the edge $u \rightarrow v$ maps $x$ to $y$.

Let $W$ be a sequence of pairs, each pair $\TL{v}{i}$ consisting of a vertex $v \in V(\KN)$ and a value $i \in [d]$, denoted as $W = (\TL{v_1}{i_1}, \TL{v_2}{i_2}, \dots, \TL{v_k}{i_k})$. We say that $W$ is a \emph{valued walk} in $(\KN, \ell)$ if $\TL{v_j}{i_j} \rightarrow \TL{v_{j+1}}{i_{j+1}}$ for each $j \in [k-1]$. If $v_1 = v_k$, then $W$ is a \emph{valued circuit}. If all vertices in $W$ are distinct, then $W$ is a \emph{valued path}. A valued circuit in which all vertices but the last are distinct is a \emph{valued cycle}.

Finally, let $u, v, w \in V(\KN)$ and $x, y \in [d]$. We say that $w$ \emph{routes $\TL{u}{x}$ to $\TL{v}{y}$} if $\TL{u}{x} \rightarrow \TL{w}{z} \rightarrow \TL{v}{y}$ for some $z \in [d]$.

\section{Permutation labels}\label{sec3}

In this section we prove our main result for permutation labels.

\restateThmP*

First, we introduce a useful tool, generalizing a technique used in~\cite{AK20,MS21} to the fixed-point cycle setting. 

Let $\ell$ be a labeling of $\KN$ that assigns to every edge $e$ a permutation $\ell_e \colon [d] \rightarrow [d]$. 
Consider an edge $u \rightarrow v$ of $\KN$ with label $\ell_{uv}$.  %
Let $(g_i)_i$ and $(h_i)_i$ denote the labels of the incoming and outgoing edges at $v$, respectively. A \emph{shifting at} $u \rightarrow v$ changes the labeling $\ell$ by replacing each $g_i$ by $g_i \circ {\ell_{uv}}^{-1}$ and each $h_i$ by $\ell_{uv} \circ h_i$. In particular, the label of $u \rightarrow v$ becomes the identity permutation $\IP_d$. Let $\ell'$ be the resulting labeling. Observe that mappings along cycles remain unchanged, in particular $(\KN,\ell')$ has a fixed-point cycle if and only if $(\KN,\ell)$ does.

\begin{proof}[Proof of Theorem~\ref{thmp}]

We start with the case $d = 2$. Suppose there is a permutation $2$-labeling $\ell$ of $\Kb_3$ without a fixed-point cycle. The two possible labels are $\IP_2$ and the function $\RP_2$ that maps $1 \mapsto 2$ and $2 \mapsto 1$. For each pair of vertices $u, v$ the labels $\ell_{uv}$ and $\ell_{vu}$ must be distinct, otherwise we have a fixed-point cycle $u \rightarrow v \rightarrow u$. This means that the number of edges labeled $\RP_2$ in $\ell$ is precisely $3$. Now consider two directed $3$-cycles in $\Kb_3$ that form a partition of the edges. One of these cycles has to contain an even number of edges labeled $\RP_2$. Clearly, that cycle maps each value to itself, a contradiction. Thus, $R_p(2) \leq 2$. 

\smallskip

\begin{figure}
\centering
\includegraphics[page=1, scale=0.7]{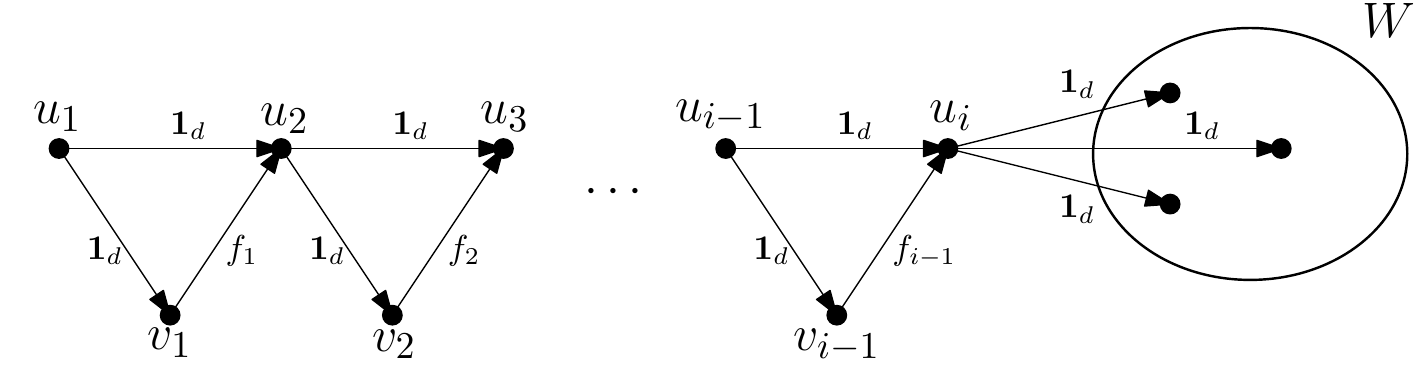}
\caption{Illustration of the proof of Theorem~\ref{thmp}.}\label{fig1}
\end{figure}

Consider now the case $d \ge 3$; let $n \geq 2d - 1$, and let $\ell$ be an arbitrary permutation $d$-labeling of $\KN$. We show that a fixed-point cycle exists.

We construct a ``chain'' consisting of vertices $u_1, \dots, u_j$ and $v_1, \dots, v_{j-1}$ (for some $j \leq d$) and transform the labeling so that each edge of the form $u_i \rightarrow u_{i+1}$ and $u_i \rightarrow v_i$ is labeled $\IP_d$. By \emph{step $i$} we mean either the edge $u_i \rightarrow u_{i+1}$ or the path $u_i \rightarrow v_i \rightarrow u_{i+1}$. Let $S_i \subseteq [d]$ denote the set of possible values to which $1$ can be mapped along some path that concatenates steps $1, \dots, i-1$. Observe that if $S_i = [d]$ then we are done, since $u_{i} \rightarrow u_1$ maps some $x \in [d]$ to $1$, and adding this edge to the path from $u_1$ to $u_{i}$ that maps $1$ to $x$ yields a fixed-point cycle.

 We construct the chain step-by-step (see Figure~\ref{fig1}), ensuring that %
$|S_i| \geq i$ for all $i$. If we reach $|S_i| = d$, then we are done, having used at most $2d-1$ vertices.

  Pick vertex $u_1 \in V(\KN)$ arbitrarily. The condition is then trivially satisfied.

Assume now that we have identified vertices $u_1, \dots, u_i$ and $v_1, \dots, v_{i-1}$ of the chain. %
Let $W \subseteq V(\KN)$ denote the set of vertices \emph{not used} yet, and shift at all edges $u_i \rightarrow u$ for $u \in W$. Observe that no label along the chain is affected. Now consider all edges \emph{between} vertices in $W$. There are two possible cases.

\noindent\underline{\emph{Case 1}}: If some edge $v \rightarrow u$ (for $u,v \in W$) maps some element $x \in S_i$ to some element $y \notin S_i$, then extend the chain with $u_{i+1} = u$ and $v_{i} = v$. The set of reachable values becomes $S_{i+1} \supseteq S_i \cup \{y\}$, establishing the claim for the next step.

\noindent\underline{\emph{Case 2}}: All edges within $W$ map $S_i$ to $S_i$, and consequently $[d] \setminus S_i$ to $[d] \setminus S_i$. Since the chain has used up $2i-1 \leq 2|S_i|-1$ vertices, the digraph induced by $W$ has ${2d-1-(2|S_i|-1)}  \geq 2 \left|[d]\setminus S_i \right|$ vertices. If $i \le d-2$, then $|[d] \setminus S_i| \ge 2$, and we can argue inductively that the digraph induced by $W$ has a fixed-point cycle. If $i = d-1$, then $|[d] \setminus S_i| = 1$ and $|W| \ge 2$, so we trivially have a fixed-point cycle.  \qedd
\end{proof}

We next show the extension of the result to the case of labels from an arbitrary (not necessarily abelian) group. %

\begin{proof}[Proof of Corollary~\ref{cor}]
Let $\ell$ be a labeling that assigns elements of the group $(G,\cdot)$ of order $d$ to edges of $\Kb_{2d-1}$.
Construct a labeling $\ell'$ of $\Kb_{2d-1}$, assigning the function ${x \mapsto x \cdot k}$ to every edge with label $k$ in $\ell$. By Theorem~\ref{thmp}, $\ell'$ has a fixed-point cycle. Suppose its labels are $f_1,\dots, f_t$, where $f_i(x) = x \cdot k_i$ for all $i \in [t]$. Then ${(f_1 \circ \cdots \circ f_t)(x) = x}$ for some $x \in G$, which implies that $k_1 \cdot \,\cdots \, \cdot k_t = 1 \in G$. \qedd
\end{proof}

\section{General function labels}\label{sec4}

Before proving our main theorem, we first present a weaker result, in order to introduce some techniques that will be used later. The result already improves the bound from~\cite{CGMMM21}, while using a similar argument.

\begin{lemma}\label{lemw}
For all $d>0$ we have $R_f(d) \leq d^3 -d^2 + d$.
\end{lemma}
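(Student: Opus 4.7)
The plan is to adapt the chain-building strategy of Theorem~\ref{thmp} to general (non-invertible) function labels, compensating for the absence of shifting by using more vertices per extension step. We proceed by strong induction on $d$, with trivial base case $d = 1$. Assume $n > d^3 - d^2 + d$ and let $\ell$ be an arbitrary $d$-labeling of $\KN$. Pick an initial vertex $u_1$, set $S_1 = \{1\}$, and iteratively build a chain $u_1, u_2, \ldots$ together with auxiliary vertices, maintaining the invariant that $S_i \subseteq [d]$ is the set of values reachable at $u_i$ by valued paths from $(u_1, 1)$ using only vertices already incorporated into the construction. Once $|S_i| = d$, we are done: since every value is reachable at $u_i$, some $x \in S_i$ satisfies $\ell_{u_i u_1}(x) = 1$, closing the chain into a fixed-point cycle with fixed point $1$.

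Each extension step, given the set $W$ of unused vertices, splits into two cases analogous to the permutation proof. In Case~1, some edge inside $W$ maps a value in $S_i$ to a value outside $S_i$; we extend the chain with a new vertex $u_{i+1}$ together with a suitable small number of auxiliary vertices in $W$, ensuring $|S_{i+1}| > |S_i|$. In Case~2, every edge within $W$ maps $S_i$ into $S_i$. Crucially, even for non-invertible functions the sub-labeling induced on $W$ with value set restricted to $S_i$ is still a valid $|S_i|$-labeling on $|W|$ vertices, so the inductive hypothesis applied with $d' = |S_i| < d$ produces a fixed-point cycle in $W$ whenever $|W| > R_f(|S_i|)$, and this cycle lifts directly to a fixed-point cycle of $\KN$.

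The main obstacle is Case~1 and the associated vertex accounting. Without shifting, we cannot normalize chain edges to the identity, so a single auxiliary vertex per step is no longer enough to guarantee growth of $S_i$: a direct chain edge $u_i \to u_{i+1}$ can only shrink the reachable set since $|\ell_{u_i u_{i+1}}(S_i)| \leq |S_i|$. To force growth, each Case~1 step must employ up to $d^2$ candidate vertices among $W$ and use a pigeonhole argument over their edge labels, either locating an edge that strictly enlarges the reachable set at the new endpoint or revealing enough structure among $W$ to trigger the recursion of Case~2 immediately. Balancing these two cases is the delicate part of the argument, but the resulting budget is consistent with the target bound: up to $d - 1$ successful extensions at roughly $d^2$ vertices apiece, plus $O(d)$ base vertices, sum to $d + d^2(d-1) = d^3 - d^2 + d$.
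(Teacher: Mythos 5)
You correctly identify the core obstruction to porting the chain argument from Theorem~\ref{thmp}: with general (non-bijective) labels there is no shifting operation, so a chain edge $u_i \to u_{i+1}$ can collapse the reachable set rather than preserve it, and nothing forces $|S_{i+1}| \ge |S_i|$. But the proposal stops exactly where the difficulty begins. The crucial Case~1 step---``employ up to $d^2$ candidate vertices and use a pigeonhole argument to either enlarge the reachable set or fall into Case~2''---is stated as a goal, not proved. There is no specification of what is being pigeonholed, and it is not hard to construct local situations where an edge exists mapping some $x \in S_i$ outside $S_i$ (so Case~2 is not triggered) and yet every way of extending the chain through any auxiliary vertex leaves $|S_{i+1}| \le |S_i|$, because the mandatory edge $u_i \to u_{i+1}$ collapses $S_i$ before the detour can help. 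Without a concrete mechanism that guarantees monotone growth (or some replacement invariant), the induction has no engine. The budget $d + d^2(d-1)$ matches the target only because it was reverse-engineered to; it does not certify that any step can actually be carried out.

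The paper's proof avoids chain-building entirely. It fixes $d$ designated vertices $v_1, \dots, v_d$ and performs a \emph{global} preprocessing pass: for each triple $(i,x,y) \in [d-1] \times [d]^2$ it tries to reserve a distinct ``responsible'' vertex in $U$ that routes $\TL{v_i}{x}$ to $\TL{v_{i+1}}{y}$, marking at most $(d-1)d^2$ vertices. An unmarked vertex $c$ then serves as the hub of the alternating walk $c, v_1, c, v_2, \dots, v_d, c$; among the $d+1$ values assumed at $c$ two must coincide (pigeonhole on $[d]$), giving a fixed-point \emph{circuit} through $c$. The circuit is turned into a simple cycle by swapping each interior visit to $c$ for its responsible vertex, which must exist precisely because $c$ was left unmarked. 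This argument needs no induction, no monotone invariant, and no shifting --- and that is exactly why it survives non-invertible labels where the chain method does not. If you want to salvage a chain-like approach for general functions, the right vehicle is the \emph{compression} machinery of \S 4 (which is how the paper gets the stronger $d^{2+o(1)}$ bound), not a direct transplant of the permutation proof.
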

\begin{proof}
Suppose that $n \geq d^3 -d^2 + d + 1$, and let $\ell$ be an arbitrary $d$-labeling of $\KN$. We show that a fixed-point cycle exists. For that, we proceed algorithmically.

Partition $V(\KN)$ arbitrarily into two parts $V = \{v_1, \dots, v_{d}\}$ and $U = \{u_1, \dots, u_{d^3-d^2+1}\}$, and consider all vertices of $U$ initially \emph{unmarked}.

In the preprocessing phase, for each triplet $(i,x,y) \in [d-1] \times [d]^2$ in turn, check whether there exists an unmarked vertex $u \in U$ that routes $\TL{v_i}{x}$ to $\TL{v_{i+1}}{y}$. If yes, then \emph{mark} $u$, and say that $u$ is \emph{responsible} for the triplet $(i,x,y)$. 

Observe that as we mark at most $(d-1)d^2$ vertices of $U$, we have at least one unmarked vertex remaining in $U$ after the preprocessing phase. Let $c \in U$ be such a vertex.  
Consider now the walk that alternates between visiting $c$ and visiting the vertices of $V$ in order, giving rise to the valued walk $W = (\TL{c}{c_0}, \TL{v_1}{x_1}, \TL{c}{c_1}, \TL{v_2}{x_2}, \dots, \TL{v_d}{x_d}, \TL{c}{c_d})$, where $x_i,c_i \in [d]$ for all $i \in [d]$ and $c_0 = 1$. %
Since $|\{c_0,\dots,c_d\}| \leq d$, there must be some $i,j$ such that $0 \leq i<j \leq d$ and $c_i = c_j$, yielding the valued fixed-point \emph{circuit} $C = (\TL{c}{c_i}, \TL{v_{i+1}}{x_{i+1}}, \dots, \TL{v_j}{x_j}, \TL{c}{c_j})$. 

It remains to transform $C$ into a cycle. For all $k$ such that $i< k <j$ replace the subpath $v_k \rightarrow c \rightarrow v_{k+1}$ in $C$ by $v_k \rightarrow c' \rightarrow v_{k+1}$, where $c'\in U$ is the \emph{unique} marked vertex that is responsible for the triplet $(k,x_k,x_{k+1})$. Such a vertex must exist in $U$, for otherwise $c$ itself would have been chosen as responsible for this triplet in the preprocessing phase. We have removed all occurrences of $c$ in $C$ but the first and last, and we have not changed the mappings of values; thus we obtain a fixed-point cycle. An efficient algorithm for finding this cycle is implicit in the proof.  \qedd
\end{proof}

Next, we introduce a transformation that will be useful in the main proof.

\paragraph{Compression.} 
Given a $d$-labeling $\ell$ of $\KN$, define the \emph{imageset} of a vertex $v$ as $\im{v} = \{y ~|~ \TL{u}{x} \RA{\ell} \TL{v}{y}, ~\mbox{for some $u,x$}\}$. In words, $\im{v}$ is the subset of values in $[d]$ that can be mapped to by edges to $v$.

We say that $v$ is \emph{$k$-compressed} if $|\im{v}| \leq k$.  
Observe that all vertices are trivially $d$-compressed, and the existence of a $1$-compressed vertex would immediately yield a fixed-point cycle. Indeed, if $v$ is $1$-compressed, then the cycle $v \rightarrow u \rightarrow v$ maps the unique element in $\im{v}$ to itself for any $u \in V(\KN)$.

We now describe the compression operation, illustrated in Figure~\ref{fig2}. Let $w$ be a $k$-compressed vertex for some $k \geq 2$ and $w' \in V(\KN) \setminus \{w\}$. Suppose there exist two paths $P_1$ and $P_2$ from $w$ to $w'$, together with two distinct values $i_1,i_2\in \im{w}$ and $j\in[d]$ such that $P_1$ maps $i_1$ to $j$ and $P_2$ maps $i_2$ to $j$.  Note that the sets of interior vertices of $P_1$ and $P_2$ do not need to be disjoint and that either of the paths may consist of a single edge.

Define the function $f \colon [d] \rightarrow [d]$ as follows: let
$f(i_2) = j$; for all $x \in \im{w} \setminus \{i_2\}$, let $f(x) = y$, where $P_1$ maps $x$ to $y$; for all $x\in[d]\setminus \im{w}$, choose $f(x)$ arbitrarily.
Now, delete all vertices of $P_1 \cup P_2$, and add a new vertex $w^{\star}$ with edges to and from all remaining vertices.
For all $v \in V(\KN) \setminus {(P_1 \cup P_2)}$, if an edge $v \rightarrow w$ had the label $g$, then the edge $v \rightarrow w^{\star}$ gets the label $g \circ f$, and if an edge $w' \rightarrow u$ had the label $h$, then the edge $w^{\star} \rightarrow u$ gets the label $h$. The labels of edges not involving $w^{\star}$ remain unchanged. 

We refer to this operation as \emph{compressing $P_1$ and $P_2$ to $w^{\star}$}. Suppose we are left with $n'$ vertices and observe that $n-n' \leq |P_1| + |P_2| - 3$, since $P_1$ and $P_2$ have common endpoints. Let $\ell'$ denote the resulting labeling of $\KM$.  We prove two crucial properties. 

\begin{lemma}\label{lem_compression}
Suppose that $P_1$ and $P_2$ (with starting vertex $w$) are compressed to $w^\star$ in the way described above, resulting in a labeling $\ell'$ of $\KM$. 
\begin{enumerate}[(i)]
\item If $w$ is $k$-compressed in $(\KN,\ell)$, then $w^{\star}$ is $(k-1)$-compressed in $(\KM,\ell')$.
\item If $(\KM,\ell')$ has a fixed-point cycle, then $(\KN,\ell)$ has a fixed-point cycle.
\end{enumerate}
\end{lemma}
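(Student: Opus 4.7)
My plan is to handle the two parts separately by carefully tracing the definition of $\ell'$.

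For (i), I first observe that every incoming edge at $w^\star$ in $(\KM,\ell')$ has the form $v\to w^\star$ with label $g\circ f$, where $g=\ell_{vw}$; in particular, it sends $x$ to $f(g(x))$. Since $g(x)\in\im{w}$ for every $x\in[d]$, this forces $\im{w^\star}\subseteq f(\im{w})$ in $\ell'$. By the construction of $f$ we have $f(i_2)=j$, and since $i_1\in\im{w}\setminus\{i_2\}$ and $P_1$ maps $i_1$ to $j$, the definition of $f$ on $\im{w}\setminus\{i_2\}$ also gives $f(i_1)=j$. Hence $f$ sends the two distinct points $i_1,i_2\in\im{w}$ to the same value, so $|f(\im{w})|\le|\im{w}|-1\le k-1$, establishing (i).

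For (ii), let $C$ be a fixed-point cycle in $(\KM,\ell')$. If $w^\star\notin V(C)$, then every label along $C$ coincides with its label in $\ell$, so $C$ is already a fixed-point cycle in $(\KN,\ell)$. Otherwise, since $C$ is simple, $w^\star$ appears exactly once, and up to cyclic rotation I may write $C=w^\star\to u_1\to\cdots\to u_k\to w^\star$. Let $x_0$ be the fixed point of $C$ and let $x_i$ denote the value produced at $u_i$; then tracing through the closing edge (with label $g\circ f$, where $g=\ell_{u_kw}$) gives $f(g(x_k))=x_0$. Setting $y=g(x_k)\in\im{w}$, I would then ``unfold'' $w^\star$ back into an original path: use $P_2$ if $y=i_2$ (in which case $P_2$ maps $y$ to $j=x_0$) and $P_1$ otherwise (in which case $P_1$ maps $y$ to $f(y)=x_0$). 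Concatenating the chosen path with $w'\to u_1\to\cdots\to u_k\to w$ using the original labels from $\ell$ -- and recalling that the edge $w^\star\to u_1$ in $\ell'$ was defined to inherit its label $h$ from $w'\to u_1$ -- produces a closed walk in $(\KN,\ell)$ that maps $y$ back to $y$.

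The main obstacle will be confirming that this unfolded closed walk is in fact a simple cycle in $\KN$. I plan to argue this from the fact that $u_1,\dots,u_k$ are distinct vertices of $\KM\setminus\{w^\star\}$, hence none of them lies in $V(P_1)\cup V(P_2)$; combined with the simplicity of the chosen $P_1$ or $P_2$, the walk visits each vertex at most once and so is the desired fixed-point cycle in $(\KN,\ell)$.
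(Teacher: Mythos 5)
Your proof is correct and follows essentially the same route as the paper's. In part (i) you arrive at the bound via the observation that $f$ identifies $i_1$ and $i_2$ (both mapping to $j$), which is a slight rephrasing of the paper's argument that $\im{w^\star}$ lands in the set $S$ of $P_1$-images of $\im{w}\setminus\{i_2\}$; in part (ii) you perform the same ``unfold $w^\star$ into $P_1$ or $P_2$'' substitution as the paper, and you are in fact a bit more explicit than the paper in verifying that the resulting closed walk is simple, using that $V(\KM)\setminus\{w^\star\}$ is disjoint from $V(P_1)\cup V(P_2)$.
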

\begin{proof}\ \\
\vspace{-0.25in}

\begin{enumerate}[(i)]

\item Let $S \subseteq [d]$ denote the set of values to which values in $\im{w} \setminus \{i_2\}$ are mapped by $P_1$. Clearly $|S| \leq |\im{w}|-1 \leq k-1$. Since $P_2$ maps $i_2$ to $j$ and $j \in S$ by construction (as $P_1$ maps $i_1$ to $j$), every edge $v \rightarrow w^{\star}$ maps all values in $[d]$ to $S$ and thus $|\im{w^{\star}}| \leq |\im{w}|-1$.  

\item If a fixed-point cycle in $(\KM,\ell')$ avoids $w^{\star}$, then it also exists in $(\KN,\ell)$. Otherwise, suppose a cycle in $(\KM,\ell')$ contains the segment $\TL{v}{x} \rightarrow \TL{w^{\star}}{y} \rightarrow \TL{u}{z}$. Then, in $(\KN,\ell)$, the edge $v \rightarrow w$ maps $x$ to some value in $\im{w}$ that is mapped by $P_1$ or $P_2$ to $\TL{w'}{y}$, and $w' \rightarrow u$ maps $y$ to $z$. Replacing $w^{\star}$ by $P_1$ or $P_2$ thus gives a fixed-point cycle in $(\KN,\ell)$.
Given a fixed-point cycle in $(\KM,\ell')$, a fixed-point cycle in $(\KN,\ell)$ can be reconstructed (i.e.,\ the compression can be undone) efficiently, with minor bookkeeping. \qedd

\end{enumerate}
\let\qed\relax
\end{proof}

\begin{figure}
\centering
\includegraphics[page=4, scale=0.7]{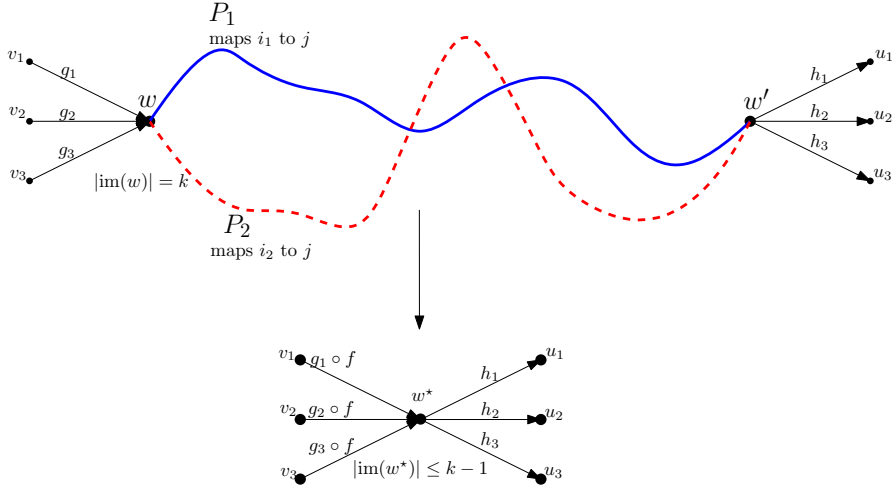}
\caption{Compressing $P_1$ and $P_2$ to $w^{\star}$.}\label{fig2}
\end{figure}

In the remainder of the section we prove our main theorem.

\restateThmM*

The high-level strategy is similar to the one used in the proof of \cref{lemw}. %
The main difference is that, instead of trying to build a cycle using a sequence of $d$ designated vertices ($v_1, \dots, v_d$) and a well-chosen center-vertex ($c$), we pick a sequence of \emph{far fewer} designated vertices, and use the structure imposed upon them by a special compressed vertex. We may fail to find a fixed-point cycle (or even a circuit) with any candidate center. In that case, however, we make progress by \emph{compressing} the special vertex further. After repeating the process sufficiently many times, we will have created a large number of highly compressed vertices. The mappings between these vertices are sufficiently restricted that we can find a fixed-point cycle in the digraph induced by them, through a \emph{recursive application} of the same procedure. 

The following lemma describes the key \emph{win-win} step of the procedure: we either find a fixed-point cycle, or identify two paths that allow the compression of a vertex.

\begin{lemma}\label{lem_main_help}
	Let $k \ge 2$,  $n \ge 4d \lceil \frac{d}{k} \rceil^2 + 2 \lceil \frac{d}{k} \rceil + 2$, and  $w \in V(\KN)$ be a $k$-compressed vertex. Then either $(\KN, \ell)$ has a fixed-point cycle, or there exist a vertex $u \in V(\KN) \setminus \{w\}$, values $i_1, i_2 \in \im{w}$ with $i_1 \neq i_2$ and $x \in [d]$, and two paths on at most $ 4 \lceil \frac{d}{k} \rceil + 2$ vertices each from $\TL{w}{i_1}$ and $\TL{w}{i_2}$ to $\TL{u}{x}$.
\end{lemma}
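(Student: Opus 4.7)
Plan.

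Let $t = \lceil d/k \rceil$. I prove the contrapositive: assuming no fixed-point cycle exists in $(\KN, \ell)$, I construct the compression setup. The plan is to iteratively build an alternating chain $u_0 = w, v_1, u_1, v_2, u_2, \ldots, v_t, u_t$ of $2t+1$ distinct vertices, adding one pair $(v_i, u_i)$ of fresh vertices at each step $i \in [t]$. Between consecutive anchors $u_{i-1}$ and $u_i$ the chain offers two traversal options: the \emph{short} direct edge $u_{i-1} \to u_i$, and the \emph{long} two-edge detour $u_{i-1} \to v_i \to u_i$. For each $v_0 \in \im{w}$, I track $S_i(v_0) \subseteq [d]$, the set of values $x$ for which some sequence of short/long choices at steps $1, \ldots, i$ gives a chain-path from $\TL{w}{v_0}$ to $\TL{u_i}{x}$.

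If at some step $i$ there are $v_0 \neq v_0'$ in $\im{w}$ with $S_i(v_0) \cap S_i(v_0') \neq \emptyset$, the compression setup is immediate: the two paths realizing the common value go from $\TL{w}{v_0}$ and $\TL{w}{v_0'}$ to a shared $\TL{u_i}{x}$, each using at most $2i + 1 \leq 2t + 1 \leq 4t + 2$ vertices. Otherwise, the sets $\{S_i(v_0)\}_{v_0}$ remain pairwise disjoint throughout the construction, and I argue for a contradiction. The main step is to choose $(v_i, u_i)$ greedily so that $|S_i(v_0)| \geq |S_{i-1}(v_0)| + 1$ for every $v_0 \in \im{w}$, and so that closing the chain by the edge $u_i \to w$ does not create a fixed-point cycle. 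After $t$ successful steps we then have $|S_t(v_0)| \geq t+1$ for every $v_0$, so
\[
    \sum_{v_0 \in \im{w}} |S_t(v_0)| \;\geq\; |\im{w}| \, (t + 1) \;\geq\; k(t + 1) \;\geq\; k + d \;>\; d
\]
(using $kt \geq d$ and, in the main case, $|\im{w}| = k$), which by pigeonhole on $[d]$ contradicts disjointness.

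To carry out the greedy step, I count the \emph{bad} pairs $(v_i, u_i)$—those failing the growth condition for some $v_0$, or creating a fixed-point cycle upon closure. The $k$-compression of $w$ (every value at $w$ must lie in $\im{w}$, of size at most $k$) keeps the number of closure-bad $u_i$ manageable, while a direct argument on the edge labels $f_{u_{i-1} v_i}$ and $f_{v_i u_i}$ bounds the number of growth-bad $v_i$ for each fixed $u_i$. Summed across the at most $k$ trajectories and $t$ steps, the total bad-pair count comes out to $O(d t^2)$, which is comfortably less than the $\Omega(n^2)$ candidate pairs available under the hypothesis $n \geq 4 d t^2 + 2 t + 2$. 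The main obstacle is to carry out this counting precisely—in particular, to secure the growth $|S_i(v_0)| \geq |S_{i-1}(v_0)|+1$ for every $v_0 \in \im{w}$ \emph{simultaneously} rather than on average, and to handle the borderline case $|\im{w}| < k$ where the pigeonhole inequality above degrades. In that sub-case one reapplies the argument with the tighter compression parameter $k' = |\im{w}|$, which only strengthens the effective bound on $t$ used in the counting, while the stated path-length bound $4 \lceil d/k \rceil + 2$ remains valid because $2t + 1$ is still well within it.
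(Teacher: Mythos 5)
Your proposal takes a genuinely different route from the paper — growing value-sets along a chain with short/long options, rather than the paper's technique of threading $k$ walks through a single carefully chosen center vertex $c$ — but the central step is not established, and I believe it is substantially harder to establish than your sketch suggests.

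The key unproven claim is that one can always find a fresh pair $(v_i, u_i)$ making \emph{every} trajectory grow simultaneously: $|S_i(v_0)| \ge |S_{i-1}(v_0)| + 1$ for all $v_0 \in \im{w}$. With general (non-injective) function labels, the images $\ell_{u_{i-1}u_i}(S_{i-1}(v_0))$ and $(\ell_{v_i u_i} \circ \ell_{u_{i-1}v_i})(S_{i-1}(v_0))$ can both shrink, and for a \emph{fixed} $u_i$ the number of $v_i$ that are ``growth-bad'' for a given trajectory is not obviously small: making $(v_i, u_i)$ bad for trajectory $v_0$ imposes only a single pointwise constraint on $\ell_{v_i u_i}$ (namely $\ell_{v_i u_i}(\ell_{u_{i-1}v_i}(x)) = \ell_{u_{i-1}u_i}(x)$ for one $x \in S_{i-1}(v_0)$), which the adversary can satisfy for a large fraction of pairs while still leaving itself enough freedom to dodge fixed-point cycles. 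Your sketch asserts the total bad-pair count is $O(d t^2)$ and that ``a direct argument on the edge labels bounds the number of growth-bad $v_i$,'' but no such argument is given, and you yourself flag this as ``the main obstacle.'' To make the count come out, you would need to invoke the no-fixed-point-cycle hypothesis in a nontrivial way (as the paper does), and it is not clear how to do that within your framework. The paper sidesteps the whole issue: it never requires growth of value-sets, instead picking one center $c$ valid for most triples $(i,x,y)$ (i.e., replaceable by at least $q-2$ other vertices), building $k$ walks totaling $(q-1)k \ge 2d$ steps, and deriving a contradiction from the fact that no repeated vertex-value pair forces $\le d$ occurrences of $c$ plus $< d$ invalid-triple encounters, which cannot cover $2d$ steps. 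That double-counting uses the validity of $c$ essentially; your chain construction has no analogue of it.

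A secondary point of confusion: you say you want to choose $u_i$ ``so that closing the chain by the edge $u_i \to w$ does not create a fixed-point cycle.'' If closing the chain did create a fixed-point cycle, that is the first alternative in the lemma's conclusion and you would be done immediately; treating it as a constraint to avoid inverts the logic and needlessly inflates your bad-pair count.
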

\begin{proof}

	Let $q = 2 \lceil \frac{d}{k} \rceil + 1$ and arbitrarily fix a $q$-subset $V \subseteq V(\KN) \setminus \{w\}$. Write $V= \{v_1, v_2, \dots, v_q\}$ and $U = V(\KN) \setminus (V \cup \{w\})$. Notice that $|U| \geq  4d \lceil \frac{d}{k} \rceil^2$.
	
	Call a vertex $c \in U$ \emph{valid} for $(i,x,y)$, where $i \in [q-1]$ and $x,y \in [d]$, if $c$ routes $\TL{v_i}{x}$ to $\TL{v_{i+1}}{y}$ and there are at least $q-1$ vertices in $U$ that route $\TL{v_i}{x}$ to $\TL{v_{i+1}}{y}$ (including $c$). Let $k_c$ denote the number of triples $(i,x,y)$ for which $c$ is \emph{not} valid. Double-counting yields
	\begin{align*}
		\sum_{c \in U} k_c \le (q-1) \cdot d^2 \cdot (q-2) < 4 d^2 \left\lceil \frac{d}{k} \right\rceil ^2.
	\end{align*}

	Thus, by the pigeonhole principle, there exists a $c \in U$ with $k_c <  \frac{4d^2\lceil \frac{d}{k} \rceil^2}{|U|} \leq d$. Fix such~a~$c$.
	
Recall that $|\im{w}| = k$ and assume without loss of generality that $\im{w} = [k]$.	For each $i \in [k]$, we construct a (valued) walk $W_i$ that starts with $\TL{v_1}{\ell_{w v_1}(i)}$ and visits $v_2, \dots, v_q$ in that order, possibly taking a detour through $c$ at every step.
	
	Fix $i\in [k]$. We describe the construction of $W_i$ by iteratively constructing prefixes $W_i^j$ that start with $v_1$ and end with $v_j$. First, let $W_i^1 = \TL{v_1}{x_1}$, where $x_1 = \ell_{w v_1}(i)$.

	For $1<j\leq q$, suppose that $W_i^{j-1}$ ends with $\TL{v_{j-1}}{x_{j-1}}$, and consider the values $y_j,x_j$ reached in the valued path $\TL{v_{j-1}}{x_{j-1}} \rightarrow \TL{c}{y_j} \rightarrow \TL{v_j}{x_j}$. %
	If $c$ is valid for $(j-1, x_{j-1}, x_j)$, then let $W_i^j = W_i^{j-1} \rightarrow \TL{c}{y_j} \rightarrow \TL{v_j}{x_j}$. Otherwise, let $W_i^j = W_i^{j-1} \rightarrow \TL{v_j}{x_j'}$ for the appropriate $x_j'$. Finally, let $W_i = W_i^q$. Note that $W_i$ contains at most $2q-1$ vertices (including up to $q-1$ occurrences of $c$).

	Observe that in the process of constructing $W_1, \dots, W_k$, we make $(q-1) \cdot k \ge 2d$ steps in total, where each step either adds $c$ to the current walk, or finds that $c$ is not valid for some triple $(j-1, x_{j-1}, x_j)$.

	We claim that some vertex-value pair occurs at least twice in all the valued walks $W_1, W_2, \dots, W_k$. Suppose not. Then, in particular, $c$ occurs at most $d$ times. Moreover, the triples $(j-1,x_{j-1},x_j)$ for which $c$ is not valid that we encounter in the construction of the walks must be pairwise distinct (if $(j-1,x_{j-1},x_j)$ occurs twice in this way, then two distinct paths must contain $\TL{v_{j-1}}{x_{j-1}}$). This means that we encounter such triples at most $k_c$ times in total. The total number of steps is then at most $d + k_c < 2d$, contradicting our earlier observation. This proves the claim.
	
	
	First, consider the case where some $\TL{v}{x}$ occurs in \emph{two different walks} $W_{i_1}, W_{i_2}$. Define $W_{i_1}'$ and $W_{i_2}'$ by removing all vertices after $v$ from $W_{i_1}$ and  $W_{i_2}$, respectively. We turn $W_{i_1}'$ and $W_{i_2}'$ into (simple) paths as follows. Suppose $c$ occurs more than once in $W_{i_1}'$. Suppose $W_{i_1}'$ contains the segment $\TL{v_i}{x_i} \rightarrow \TL{c}{y_i} \rightarrow \TL{v_{i+1}}{x_{i+1}}$. If this is not the first occurrence of $c$, then we simply replace $c$ by some vertex $c' \in U \setminus \{c\}$ that routes $\TL{v_i}{x_i}$ to $\TL{v_{i+1}}{x_{i+1}}$ and has not been used in this way before. Note that we do this at most $q-2$ times (once for each occurrence of $c$ in $W_{i_1}'$ except for the first one), and there are at least $q-2$ such vertices in $U \setminus \{c\}$ by construction, so the operation is well-defined. We proceed the same way for $W_{i_2}'$ (recall that the paths needed for the compression operation need not be internally disjoint). 
	Call the resulting (simple) paths $W_{i_1}''$ and $W_{i_2}''$, respectively. Then $\TL{w}{i_1} \rightarrow W_{i_1}''$ and $\TL{w}{i_2} \rightarrow W_{i_2}''$ are the desired paths of at most $2q \leq 4\lceil \frac{d}{k}\rceil +2$ vertices, and we are done.
	
	Second, suppose that some $\TL{v}{x}$ occurs twice \emph{in a single walk} $W = W_i$. Then $v = c$, since each $v_i$ occurs only once in $W$. We now find a sub-walk of $W$ that has a fixed point. Remove all vertices before the first occurrence of $\TL{c}{x}$ and all vertices after the second occurrence of $\TL{c}{x}$, and call the resulting valued walk $W'$. Clearly, $W'$ is a fixed-point walk and contains at most $q-1$ occurrences of $c$ (counting both start and end individually). Then, similarly as in the first case, transform $W'$ into a simple cycle by replacing all occurrences of $c$, except at the start/end, by suitable vertices (at most $q-3$ of them) in $U \setminus \{c\}$. The result is a fixed-point cycle.  \qedd

\end{proof}

We prove \cref{thmm} via the following recurrence.

\begin{lemma}\label{lem_rec}
	For all $d \ge 2^9$, we have $R_f(d) < 5 d^2 + 9 d \log_2 d \cdot ( R_f(\lfloor \sqrt{d} \rfloor) + 1 )$.
\end{lemma}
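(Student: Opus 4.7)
Set $d' = \lfloor \sqrt{d} \rfloor$ and $M = R_f(d') + 1$. Given a $d$-labeling $\ell$ of $\KN$ with $n \ge 5d^2 + 9 d \log_2 d \cdot M$, the plan is to iteratively construct a set $S$ of $M$ vertices that are $d'$-compressed (with respect to the current labeling) and then find a fixed-point cycle in the induced subgraph on $S$ via a reduction to the $d'$-labeling case. The set $S$ grows one vertex per \emph{round}: in round $m$, we pick any vertex $w$ in the current graph outside $S$ and iteratively apply \cref{lem_main_help} to the subgraph $V'$ induced on the current vertex set minus $S$ (which still contains $w$), at compression levels $k = d, d-1, \ldots, d'+1$. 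At each step, the lemma either returns a fixed-point cycle in $V'$ -- in which case \cref{lem_compression}(ii) applied to the chain of previous compressions lifts it to a fixed-point cycle in the original $(\KN, \ell)$ and we are done -- or it supplies two short paths enabling a compression of $w$ to a $(k-1)$-compressed vertex. Upon reaching level $d'$, the resulting vertex joins $S$ and the next round begins.

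Two observations justify this construction. First, the paths $P_1, P_2$ supplied by \cref{lem_main_help} lie entirely in $V'$, so no vertex of $S$ is ever deleted; moreover, each compression can only shrink the image set of any surviving vertex (the new edge $w^{\star} \to v$ carries the old label of $w' \to v$, contributing exactly what $w'$ contributed before, while edges from other deleted sources simply vanish), so every vertex placed in $S$ remains $d'$-compressed across all subsequent rounds. Second, the vertex budget suffices: a single compression at level $k$ costs at most $|P_1| + |P_2| - 3 \le 8\lceil d/k \rceil + 1$ vertices, and summed over $k = d'+1, \ldots, d$ this is at most $8d \cdot \tfrac{1}{2}\ln d + O(d) \le 5 d \log_2 d$ per round for $d \ge 2^9$. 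The hardest precondition of \cref{lem_main_help} occurs at $k = d'+1$ and demands at most $4d(\sqrt{d}+1)^2 + O(\sqrt{d}) \le 5d^2$ vertices in $V'$, which is comfortably satisfied: even after all $M$ rounds, $|V'| \ge n - 5Md\log_2 d - M \ge 5d^2$ under the hypothesis.

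Once $|S| = M$, consider the induced subgraph on $S$, a copy of $\Kb_M$. For each $v \in S$, the image set $\im{v}$ (within this subgraph) has size at most $d'$, so we pick an arbitrary superset $I_v \supseteq \im{v}$ with $|I_v| = d'$ together with a bijection $\phi_v : [d'] \to I_v$. Define a $d'$-labeling $\ell'$ of $\Kb_M$ by $\ell'_{uv}(x) = \phi_v^{-1}(\ell_{uv}(\phi_u(x)))$, which is well-defined because $\ell_{uv}(y) \in \im{v} \subseteq I_v$ for every $y \in [d]$. Since $M > R_f(d')$, the labeling $\ell'$ has a fixed-point cycle; tracing through the bijections, any fixed point $x \in [d']$ of the composition of $\ell'$-labels around this cycle yields the value $\phi_{v_1}(x) \in I_{v_1}$, which is a fixed point of the composition of $\ell$-labels around the same cycle. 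A final application of \cref{lem_compression}(ii) lifts this through all compressions performed in rounds $1, \ldots, M$ to a fixed-point cycle in the original $(\KN, \ell)$.

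The main obstacle we anticipate is the quantitative bookkeeping in the second paragraph: ensuring that the precondition of \cref{lem_main_help} remains satisfied through all $M \cdot (d - d')$ compressions. The qualitative pieces -- invariance of $S$ under subsequent compressions and correctness of the $d'$-labeling reduction -- follow directly from the setup already provided by \cref{lem_main_help} and \cref{lem_compression}.
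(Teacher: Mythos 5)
The proposal is correct and follows essentially the same strategy as the paper: iteratively compress vertices via \cref{lem_main_help} to accumulate $R_f(\lfloor\sqrt{d}\rfloor)+1$ fully compressed vertices, then recurse on the $\lfloor\sqrt{d}\rfloor$-labeling induced on them. The scheduling difference (fully compressing one vertex per round versus maintaining a pool $T$ and compressing any not-yet-finished member by one level per step) is cosmetic, and your explicit observation that a compression can only shrink the image set of any surviving vertex is an ingredient the paper leaves implicit but also relies on.
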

\begin{proof}
	Set  $n=5 d^2 + 9 d \log_2 d \cdot ( R_f(\lfloor \sqrt{d} \rfloor) + 1 )$ and consider an arbitrary labeling $\ell$ of $\KN$. We show that $(\KN, \ell)$ contains a fixed-point cycle. We again proceed algorithmically.  Our strategy is to perform transformations on $\KN$ and $\ell$, such as to compress vertices using \cref{lem_main_help}. %
	 We say that a $\lfloor \sqrt{d} \rfloor$-compressed vertex is \emph{fully compressed}. 
	 
	 We first observe that if we have more than $R_f(\lfloor \sqrt{d} \rfloor)$ fully compressed vertices, then we are done. Indeed, consider the subdigraph induced by the set $F$ of vertices that are fully compressed. For all $u,v \in F$, restrict each function $\ell_{uv}$ to  an arbitrary subset of $[d]$ of size $\lfloor\sqrt{d}\rfloor$ containing $\im{u}$. Applying an arbitrary bijection from each such set to $[\lfloor\sqrt{d}\rfloor]$, transform the restricted labeling into a valid $\lfloor \sqrt{d} \rfloor$-labeling on the subdigraph induced by $F$ without creating new fixed-point cycles. Then, if $|F| > R_f(\lfloor \sqrt{d} \rfloor)$, we can recursively find a fixed-point cycle in the induced subdigraph, from which we can recover a fixed-point cycle in~$(\KN, \ell)$.
	
	Now we explain how we obtain the required number of fully compressed vertices. %
	Let $T$ be an arbitrary set of $R_f(\lfloor \sqrt{d} \rfloor)+1$ vertices which are to be compressed and let $S$ be the remaining set of vertices of $\KN$.  At each step, let $w\in T$ be any vertex that is \emph{not} fully compressed; say $w$ is $k$-compressed. Apply \cref{lem_main_help} to $w$ and the subdigraph induced by $\{w\}\cup S$. %
	If we find a fixed-point cycle, then we are immediately done by \cref{lem_compression}. Otherwise, we find two paths $P_1$ and $P_2$ starting at $w$ that we can compress into a single $(k-1)$-compressed vertex $w^\star$.  Now set $T = T\setminus\{w\}\cup\{w^\star\}$ and $S = S\setminus (P_1\cup P_2)$. Note that the size of $S$ reduces by at most $8\lceil\frac{d}{k}\rceil+1 \leq 8 \frac{d}{k}+9$.
	
	We need to ensure that 
	we always have enough vertices to apply \cref{lem_main_help}. First we count the number of vertices that get removed from $S$ throughout the process. Note that, for each $k$ such that $\lfloor \sqrt{d}\rfloor +1 \leq k\leq d$, we transform a $k$-compressed vertex of $T$ into a $(k-1)$-compressed vertex of $T$ at most $R_f(\lfloor \sqrt{d} \rfloor)+1$  times, and every time we perform such an operation, we remove at most $8 \frac{d}{k}+9$ vertices from $S$. Thus, the number of vertices we remove throughout the process is at most
	\begin{align*}
	 (R_f(\lfloor \sqrt{d} \rfloor)+1 )\sum_{k = \lfloor \sqrt{d} \rfloor+1}^d \Big(8 \frac{d}{k} + 9\Big) & ~\le~  (R_f(\lfloor \sqrt{d} \rfloor)+1)\left(8 d \log_2 d + 9d - 1\right)  \\
	 &~\le~ (R_f(\lfloor \sqrt{d} \rfloor)+1)(9 d \log_2 d-1),
	\end{align*}
	where the first inequality uses the fact that $\sum_{k=1}^d 1/k \le \log_2 d$, and the second inequality uses our assumption $\log_2 d \ge 9$.

	Also accounting for the $R_f(\lfloor \sqrt{d} \rfloor)+1$ vertices in $T$, it follows that the final set $S$ has at least $5d^2$ remaining vertices, which is enough to ensure that we can apply \cref{lem_main_help}. Indeed, the number of additional vertices needed to apply \cref{lem_main_help} is
	\begin{align*}
	 4d \Bigl\lceil \frac{d}{k} \Bigr\rceil^2 + 2 \Bigl\lceil \frac{d}{k} \Bigr\rceil + 1 
 ~~\leq ~~& 4d(\sqrt{d}+1)^2 + 2(\sqrt{d}+1)+1 \\
	 ~~\leq ~~& 5 d^2,
	\end{align*}
	since we always have $k \ge \lfloor \sqrt{d} \rfloor + 1 \ge \sqrt{d}$, and $d \geq 16$. 	
	Thus, the process terminates successfully, leaving us with the set $T$ of $R_f(\lfloor \sqrt{d} \rfloor)+1$ fully compressed vertices, using which we find a fixed-point cycle, as discussed above. \qedd
\end{proof}

Finally, we show that the above recurrence  solves to $R_f(d) \in \fO( d^2 \cdot  2^{(\log_2{\log_2{d}})^2} )$, thus implying \cref{thmm}. %

\begin{lemma}\label{lem_calc}
For all $d \geq 4$, we have $R_f(d) \leq 16 \cdot d^2 \cdot 2^{(\log_2{\log_2{d}})^2}$.
\end{lemma}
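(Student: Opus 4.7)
The plan is to prove the bound by strong induction on $d$, using the recurrence of Lemma~\ref{lem_rec} in the inductive step and the weaker cubic bound of Lemma~\ref{lemw} in the base case. Write $L := \log_2 \log_2 d$ and $F(d) := 16 d^2 \cdot 2^{L^2}$ for the target upper bound.

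For the base case I would verify $R_f(d) \le F(d)$ directly on some range $[4, D]$, where $D$ is a moderate threshold (around $2^{19}$). Here Lemma~\ref{lemw} gives $R_f(d) < d^3$, so the goal reduces to the elementary inequality $\log_2 d \le 4 + L^2$. A routine monotonicity check on $y \mapsto y - 4 - (\log_2 y)^2$ with $y = \log_2 d$ shows that this holds well past any sensible choice of $D$.

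For the inductive step $d \ge D$, set $m = \lfloor \sqrt{d} \rfloor$. The decisive observations are $m^2 \le d$ and $L(m) \le L - 1$, which together with the induction hypothesis yield $R_f(m) \le F(m) \le 16 d \cdot 2^{(L-1)^2}$. Plugging this into Lemma~\ref{lem_rec} and rewriting $\log_2 d = 2^L$, the dominant contribution is $9 d (\log_2 d) R_f(m)$, whose exponent collapses to $L + (L-1)^2 = L^2 - L + 1$. This is a factor $2^{L-1}$ below the target exponent $L^2$, and that saving is exactly what absorbs the multiplicative constants: after dividing by $F(d) = 16 d^2 \cdot 2^{L^2}$ the dominant term simplifies to $18/2^L$, which drops below $1$ as soon as $2^L > 18$, i.e., $\log_2 d > 18$. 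The lower-order contributions $5/(16 \cdot 2^{L^2})$ and $9 \cdot 2^L/(16 d \cdot 2^{L^2})$ are negligible in this regime, so choosing $D$ slightly above $2^{18}$ closes the induction with slack.

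The main obstacle is that the induction is relatively tight in the constant $16$: the factor $9$ inside Lemma~\ref{lem_rec} combines with the passage from $(L-1)^2$ to $L^2$ to leave the critical ratio $18/2^L$, so the induction does not close uniformly for all $d \ge 2^9$. As a consequence, the base case is forced to reach up to roughly $2^{18}$--$2^{19}$, and must be handled by the cruder cubic bound of Lemma~\ref{lemw}. Everything else is routine bookkeeping once the threshold $D$ is fixed.
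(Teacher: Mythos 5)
Your proposal is correct and follows essentially the same approach as the paper: strong induction with the base case handled by the cubic bound of \cref{lemw}, and the inductive step plugging the hypothesis into \cref{lem_rec} and exploiting the identity $\log_2\log_2\sqrt{d} = \log_2\log_2 d - 1$. Your bookkeeping is marginally tighter (critical ratio $18/2^L$ versus the paper's $20/\log_2 d$, hence a threshold of roughly $2^{18}$ instead of $2^{20}$), but one phrasing deserves care: the function $y \mapsto y - 4 - (\log_2 y)^2$ is not monotone and in fact becomes positive around $y \approx 26.5$, so the base-case inequality $\log_2 d \le 4 + (\log_2\log_2 d)^2$ only holds up to $d \approx 2^{26}$ and must be verified on the finite range $[4,D]$ rather than ``well past any sensible $D$''; since $D$ is chosen around $2^{19}$, this does not affect correctness.
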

\begin{proof}

If $d < 2^{20}$, the claim follows through Lemma~\ref{lemw}, since $16 \cdot d^2 \cdot 2^{(\log_2{\log_2{d}})^2} \geq d^3-d^2 + d$ for all $d<2^{20}$. 

Assume therefore that $d\geq2^{20}$.  Denoting $C = 16$ we have:
\begin{align*}
R_f(d) & \leq 5 d^2 + 9 d \log_2 d \cdot ( R_f(\lfloor \sqrt{d} \rfloor) + 1 ) \tag{\small from Lemma~\ref{lem_rec}}\\
& \leq 5 d^2 + 9 d \log_2{d} \cdot \left( C \cdot (\lfloor \sqrt{d} \rfloor)^2 \cdot 2^{(\log_2{\log_2{\lfloor \sqrt{d} \rfloor}})^2} +1 \right) \tag{\small by induction}\\
& \leq 5 d^2 + 9 d \log_2{d} \cdot \left(C \cdot d  \cdot 2^{(\log_2{\log_2{\sqrt{d}}})^2}+1 \right) \tag{\small using $\lfloor \sqrt{d} \rfloor \leq \sqrt{d}$}\\
& \leq 6 d^2 + 9C \cdot d^2 \log_2{d} \cdot 2^{(\log_2{\log_2{\sqrt{d}}})^2} \tag{\small using $9 \log_2{d} \leq d$, for $d \geq 52$}\\
& \leq 10C \cdot d^2 \log_2{d} \cdot 2^{(\log_2{\log_2{\sqrt{d}}})^2} \tag{\small using $d \ge 2^6$}\\
& = 10C \cdot d^2 \log_2{d} \cdot 2^{((\log_2{\log_2{d}}) - 1)^2} \\
& = 10C \cdot d^2 \cdot 2^{((\log_2{\log_2{d}}) - 1)^2+\log_2{\log_2{d}}}\\
& = C \cdot d^2 \cdot 2^{(\log_2{\log_2{d}})^2} \cdot \left( 10 \cdot 2^{1- \log_2{\log_2{d}}} \right)\\
& = C \cdot d^2 \cdot 2^{(\log_2{\log_2{d}})^2} \cdot \left( 20/\log_2{d} \right)\\
& \leq C \cdot d^2 \cdot 2^{(\log_2{\log_2{d}})^2}.\tag{\small using $d \geq 2^{20}$} \\
\end{align*}%
This concludes the proof. \qedd
\end{proof}

Note that the constant factor $16$ in Lemma~\ref{lem_calc} can be significantly reduced through a more careful optimization, but we have avoided this, preferring a simpler presentation. 

\newpage
\small
\bibliographystyle{halphashort}
\bibliography{fixpoint_cycles}

\end{document}